\documentclass[runningheads]{llncs}
\usepackage[T1]{fontenc}
\usepackage[numbers]{natbib}
\usepackage{amsmath,amssymb,amsfonts}
\usepackage{algorithm}
\usepackage{algorithmic}
\usepackage{graphicx}
\usepackage{textcomp}
\usepackage{xcolor}
\usepackage{hyperref}
\usepackage{listings}
\usepackage{booktabs}
\usepackage{pifont}
\usepackage{tikz}
\usetikzlibrary{shapes,arrows,positioning,fit,backgrounds,patterns}
\usepackage{pgfplots}
\pgfplotsset{compat=1.18}

\begin{document}

\title{Heartbeat-Bound Hierarchical Credentials: Cryptographic Revocation for AI Agent Swarms}

\titlerunning{Heartbeat-Bound Hierarchical Credentials}
\author{Saurabh Deochake}
\authorrunning{S. Deochake}
%
\institute{SentinelOne Inc.\\
  Mountain View, USA\\
\email{saurabh.deochake@sentinelone.com}}
\maketitle              
\begin{abstract}
  Autonomous AI agents that spawn sub-agent swarms create a safety gap: existing credential revocation mechanisms, OAuth~2.0 introspection, OCSP, and W3C Status Lists, require network connectivity to a central authority, leaving ``zombie agents'' executing privileged operations for minutes to hours after operator shutdown. We present Heartbeat-Bound Hierarchical Credentials (HBHC), a cryptographic protocol that binds credential validity to periodic parent liveness proofs. Verifiers enforce freshness using only a cached public key and local clock; no network round-trip is required. When heartbeat generation ceases, all descendant credentials become unusable within a deterministically bounded window $W_z \le W_{\max} + \Delta_h + \epsilon$, conditional on bounded clock skew and parent keys held in secure enclaves. Evaluation at the protocol layer and with real LLM-backed agent swarms (GPT-4o-mini) demonstrates a 90$\times$ reduction in the zombie window over OAuth~2.0, 0.26~ms full authentication in Rust, 18{,}000+ verifications per second under concurrent HTTP load, and stable per-verification latency from 10 to 10{,}000 agents. Real-agent experiments show 0.71\% end-to-end overhead on tool calls, zero post-revocation tool calls under prompt injection that bypasses application-layer guardrails, and cascading revocation across a 49-agent four-level hierarchy within the theoretical bound.

  \keywords{AI agent security \and agentic AI \and agent swarm \and authenticated delegation \and credential revocation \and hierarchical deterministic keys \and prompt injection defense \and zero-trust.}
\end{abstract}
%
%
%
\section{Introduction}
\label{sec:introduction}

Agentic AI systems~\cite{gabriel2024agentic, hammond2025multiagent} present novel challenges for human control over autonomous software. AI agents dynamically spawn sub-agent swarms, delegate authority across trust boundaries, and execute multi-step plans that can diverge from operator intent. A single orchestrator may instantiate hundreds of workers, each carrying derived credentials with access to APIs, databases, and production infrastructure. The resulting credential hierarchy mirrors the agent hierarchy: if a parent agent is compromised, hallucinates, or triggers a safety violation, every descendant credential must be invalidated promptly, even when sub-agents execute asynchronously across distributed infrastructure.

Consider a concrete scenario. An autonomous coding assistant spawns 50 sub-agents to patch a production codebase, each carrying derived credentials with write access to repositories, CI/CD pipelines, and deployment targets. Midway through, the orchestrator hallucinates and begins generating patches that introduce security flaws. The operator terminates the orchestrator, but the 50 sub-agents dispatched with valid OAuth~2.0 tokens (TTL: 1 hour) continue pushing commits, triggering builds, and deploying compromised code until their tokens expire. As inference speeds increase and agent autonomy deepens, this gap between ``operator decides to stop'' and ``all agents actually stop'' becomes a first-order safety concern~\cite{amodei2016concrete} that the EU AI Act mandates be addressed for high-risk systems~\cite{euaiact2024}.

We term this the \emph{zombie agent problem}: sub-agents continuing to exercise valid credentials after their parent has been revoked or shut down. The problem is not hypothetical: the ZombieAgent vulnerability~\cite{radware2026zombieagent} is a zero-click attack on ChatGPT agents that persists in memory and exfiltrates data invisibly; Gu~et~al.~\cite{gu2024agentsmith} show that a single adversarial image can compromise one million multimodal agents exponentially fast through inter-agent memory propagation. Recent surveys systematize these threats~\cite{deng2025agents}. Credential lifetime bounds the exposure window at 15--60 minutes for OAuth~2.0 access tokens~\cite{rfc7662} and 5--15 minutes for short-lived X.509 certificates. Existing revocation mechanisms all require network connectivity: W3C Bitstring Status Lists~\cite{w3c2025bitstring} require fetching a bitstring, OCSP~\cite{rfc6960} requires contacting a responder, and short-lived certificates require renewal from a CA. In distributed agent deployments, where control-plane outages, IdP failures, API rate-limiting, or cross-region latency spikes can sever the verifier-to-authority connection, these mechanisms provide no safety guarantees.

We propose \emph{Heartbeat-Bound Hierarchical Credentials} (HBHC), a cryptographic protocol that provides deterministic agent termination by binding credential validity to ongoing parent liveness proofs. Parent agents periodically generate signed heartbeat commitments; sub-agents incorporate a recent heartbeat into every authentication proof; verifiers check freshness using only local time and the parent's public key, with no network round-trip. HBHC leverages hierarchical deterministic key derivation~\cite{wuille2012bip32, dijkhuis2024hdkeys} so that a valid proof requires both the sub-agent's private key and a fresh parent heartbeat, a conjunction that neither credential theft nor key forgery can bypass. The design rests on three principles: termination is implicit (it occurs naturally when heartbeat generation ceases, without requiring a kill signal to reach every sub-agent); verification is local (only a cached public key and clock are needed); and the safety bound is deterministic, with $W_z \leq W_{\max} + \Delta_h + \epsilon$ holding independently of agent behavior, network conditions, or prompt state under bounded clock skew. The fail-secure default is tempered by three production mitigations (pre-computation buffers, configurable grace periods, multi-path delivery) that reduce false-positive denial rates to 0.01\% under extreme packet loss.

This paper makes four contributions. (1)~We formalize the zombie agent problem with a Dolev-Yao threat model and derive six security requirements including fail-safe authorization (Section~\ref{sec:problem}). (2)~We present the HBHC protocol with three algorithms, an explicit state machine, and formal proofs for bounded revocation and partition tolerance (Section~\ref{sec:protocol}). (3)~We describe integration with OAuth~2.0 and W3C Verifiable Credentials through standard extension points (Section~\ref{sec:architecture}). (4)~We evaluate HBHC at the protocol layer and with real LLM-backed agent swarms (Section~\ref{sec:evaluation}), demonstrating a $90\times$ zombie window reduction over OAuth~2.0, 0.26\,ms full authentication in Rust, stable per-verification latency from 10 to 10{,}000 concurrent agents, 0.71\% end-to-end overhead with GPT-4o-mini agents, zero post-revocation tool calls under prompt injection that bypasses guardrails, cascading revocation of 49 agents across four hierarchy levels, and $295\times$ credential-theft exposure reduction versus OAuth~2.0.

\section{Background and Related Work}
\label{sec:background}

\subsubsection{Agent identity and credential revocation.}

Recent work on AI agent identity addresses delegation but not offline revocation. South~et~al.~\cite{tobin2025delegation} argue that AI agents need authenticated delegation and propose extending OAuth~2.0 and OpenID Connect with agent-specific credentials, but revocation still requires contacting the authorization server. IETF Transaction Tokens for Agents~\cite{raut2025txntokens} provide traceability in connected service meshes, and the IETF AI Agent Authentication draft~\cite{kasselman2026aiagentauth} models agents as workloads within WIMSE; both rely on network-accessible authorization endpoints. Goswami~\cite{goswami2025agenticjwt} binds agent identity to configuration hashes via Agentic JWT, but revocation defaults to token expiration. Rodriguez Garzon~et~al.~\cite{garzon2025agents} equip agents with DIDs and Verifiable Credentials, a complementary identity layer that HBHC can extend with temporal liveness binding. In the enterprise identity layer, Deochake and Channapattan~\cite{deochake2022iam} describe an IAM framework for multi-tenant hybrid-cloud resources that provisions mirror identities with scoped privileges; HBHC is orthogonal, adding deterministic temporal revocation on top of such identity layers. Macaroons~\cite{birgisson2014macaroons} support offline verification via caveats, but once a discharge macaroon is issued it remains valid until expiration, and a compromised discharging parent cannot retroactively invalidate it without contacting the target service.

The PKI revocation literature is extensive but shares a common assumption: the verifier can reach a revocation authority in real time, or can tolerate the credential lifetime as a zombie window. Chuat et al.~\cite{chuat2020sok} systematize delegation and revocation in a 19-criteria framework, observing that short-lived delegated credentials plus revocation can address key problems under continuous connectivity. Camenisch and Lysyanskaya~\cite{camenisch2002accumulators} introduced dynamic accumulators for privacy-preserving revocation; Muid~et~al.~\cite{muid2025accurevoke} modernize this with AccuRevoke, distributing accumulators across edge providers with GPU-accelerated witness generation; both still require network round-trips for witness updates. The W3C Bitstring Status List~\cite{w3c2025bitstring} requires fetching the bitstring. SPIFFE/SPIRE issues short-lived X.509 SVIDs (typically 1-hour TTL) to Kubernetes workloads; revocation is passive, with a zombie window equal to certificate lifetime. For autonomous swarms executing asynchronously, neither network connectivity nor multi-minute zombie windows are acceptable.

\subsubsection{AI safety and agent control.}

The AI safety literature identifies ``safe interruptibility'' as a concrete open problem~\cite{amodei2016concrete}. Current approaches operate at the application layer: guardrail frameworks (NeMo Guardrails, Guardrails.ai) intercept prompts and outputs, but a compromised or jailbroken agent can bypass them because guardrails operate within the agent's execution context. HBHC provides an identity-layer complement below that context but above the network stack: even if the prompt is injected, guardrails bypassed, or safety filters disabled, the agent cannot authenticate without a fresh parent heartbeat. The EU AI Act~\cite{euaiact2024} mandates human oversight and shutdown mechanisms for high-risk systems (Articles~9, 14); HBHC provides a cryptographically verifiable implementation of this requirement.

\subsubsection{HD keys, leases, and hardware-rooted attestation.}

Hierarchical Deterministic (HD) key derivation (BIP-32~\cite{wuille2012bip32}; IETF draft~\cite{dijkhuis2024hdkeys}) enables deterministic generation of child keys from a parent. Prior work uses HD keys for cryptocurrency key rotation; Colombatto~et~al.~\cite{colombatto2024hdkeys} apply BIP-32-style hierarchies to IoT identity within SSI frameworks, but without temporal binding. In a related zero-trust direction, Deochake~et~al.~\cite{deochake2025multicloud} present a multi-cloud workload authentication framework that establishes cryptographic workload identity across heterogeneous cloud providers; HBHC complements this by adding a deterministic temporal revocation primitive on top of such identity layers. HBHC's core contribution is temporal binding: HD-derived credentials expire unless continuously refreshed, aligning with NIST Zero Trust principles~\cite{rose2020nist}.

Distributed-systems leases (Gray and Cheriton~\cite{gray1989leases}), exemplified by Chubby~\cite{burrows2006chubby} and ZooKeeper~\cite{hunt2010zookeeper}, provide time-bounded liveness via periodic KeepAlive RPCs to a coordination service. HBHC differs in three respects. Direction: traditional leases flow from client to server (the server detects client failure), whereas HBHC heartbeats flow from parent to child and the verifier (a third party) detects parent failure. Verification locality: coordination services require a network round-trip to evaluate session validity, whereas HBHC uses only a cached public key and local clock. Transferability: a lease is server-side state, whereas an HBHC heartbeat is a signed, self-contained cryptographic object any verifier can validate independently. This transforms the lease pattern from a coordination mechanism (requiring a consensus group) into a credential mechanism (requiring only local verification), suitable for partitioned, decentralized agent deployments.

Trusted Execution Environments (Intel SGX~\cite{costan2016sgx}, ARM TrustZone, AMD SEV~\cite{menetrey2022attestation}) provide hardware-rooted identity via remote attestation, but attestation is a point-in-time operation: it proves what code was loaded, not that the enclave is still authorized, and SGX EPID/DCAP flows reintroduce network dependencies. TEE attestation and HBHC are complementary: TEEs protect heartbeat signing keys from extraction (Threat~3, Section~\ref{sec:problem}), while HBHC provides the temporal liveness binding that attestation lacks. Table~\ref{tab:landscape} summarizes the competitive positioning.

\begin{table}[ht]
  \centering
  \caption{Revocation mechanism comparison. HBHC uniquely provides offline, proactive revocation with a bounded zombie window.}
  \label{tab:landscape}
  \begin{tabular}{lccc}
    \toprule
    Mechanism & Offline & Proactive & Zombie $W_z$ \\
    \midrule
    OAuth 2.0 Introspection & No & No & 15--60 min \\
    OCSP / CRL & No & No & Hours--days \\
    W3C Bitstring Status List & No & No & Seconds--min \\
    SPIFFE/SPIRE (SVIDs) & No & Yes & Cert lifetime \\
    Short-lived Certs & No & Yes & Cert lifetime \\
    DS Leases (ZK/Chubby) & No & Yes & Session TTL \\
    TEE Attestation (SGX) & No & No & Platform life \\
    AccuRevoke~\cite{muid2025accurevoke} & No & No & Update interval \\
    HBHC (ours) & Yes & Yes & 5--30\,s \\
    \bottomrule
  \end{tabular}
\end{table}

\section{Problem Formalization}
\label{sec:problem}

\subsection{System Model}

We model a hierarchical agent system as a rooted tree $\mathcal{T} = (V, E)$ where vertices represent agents and directed edges represent delegation. The system comprises: a trusted Root Authority (RA) at level $l=0$ that bootstraps the hierarchy; Parent Agents ($A_p$) at $l \geq 1$ that maintain identity, heartbeat, and child derivation keys and can delegate to children; Child Agents ($A_c$) whose credentials are cryptographically derived from their parent's keys (children hold only derived keys and the parent's heartbeat public key, not parent private keys); and Verifiers (V) that validate authentication proofs using locally cached parent heartbeat public keys and a clock synchronized to within $W_{\max}$.

\subsection{Threat Model}

We consider a computationally bounded adversary $\mathcal{A}$ in the Dolev-Yao model~\cite{dolev1983security}: $\mathcal{A}$ controls the network (intercept, drop, reorder, replay, inject) and may compromise any non-root agent's software environment, but cannot break standard cryptographic primitives (ECDSA under CDH, HMAC-SHA256, SHA-256), compromise the Root Authority, or extract keys from secure enclaves (HSM, TPM, or TEE) in which parent heartbeat signing keys are provisioned. We assume verifiers have cached the parent's heartbeat public key $hpk_p$ during an initial trust-establishment phase before any partition occurs (Corollary~1) and that clocks are synchronized to within $\epsilon < W_{\max}$.

Within this model, we distinguish four threats that motivate HBHC. For each we describe the attack, the state of key material, and whether the offline guarantee applies.

\begin{enumerate}
  \item \emph{Agent misalignment.} A parent agent begins generating unsafe outputs (hallucination, prompt injection via a malicious tool response, jailbreak-induced policy violation) and is shut down by the operator. The agent's keys remain secure in their execution environment; only the process is stopped. Sub-agents already dispatched continue running with derived credentials and cached heartbeats. HBHC addresses this threat offline: once heartbeat generation ceases, all descendants lose authentication within $W_z \leq W_{\max} + \Delta_h + \epsilon$, regardless of whether sub-agents cooperate, are reachable, or have themselves been compromised at the application layer. This is the primary motivating threat.

  \item \emph{Child key compromise.} $\mathcal{A}$ compromises a non-root child $A_c$, extracting its derived identity key $sk_c$ and any cached heartbeats. The adversary can impersonate $A_c$, but only while those heartbeats remain fresh: each expires within $W_{\max}$ seconds of emission, and the adversary cannot forge new heartbeats without also compromising the parent. Revoking the parent terminates both the legitimate child and the adversary's session simultaneously, because the verifier's freshness check fails the moment heartbeat generation stops. HBHC tightens offline exposure from the credential lifetime (15--60 minutes for OAuth~2.0) to the freshness window.

  \item \emph{Parent key exfiltration.} $\mathcal{A}$ extracts the parent's heartbeat signing key $hsk_p$ from its execution environment, enabling indefinite heartbeat forgery. Revoking $hpk_p$ at verifiers requires a network push from the Root Authority to update cached public keys; HBHC cannot address this threat offline and falls back to network-dependent PKI revocation. We mitigate this by holding $hsk_p$ in a secure enclave whose threat model prevents extraction even if the agent's software environment is fully compromised. Misuse of the parent key then requires a physical or side-channel attack on the enclave, which is outside our scope.

  \item \emph{Network partition.} $\mathcal{A}$ partitions the network arbitrarily and for unbounded duration. HBHC's offline verification is designed for this case: a partitioned verifier, holding only a cached $hpk_p$ and a local clock, still enforces revocation within the bounded window. A partitioned parent, unable to deliver heartbeats to its descendants, causes them to fail authentication at their verifiers within $W_{\max} + \Delta_h + \epsilon$ of the last delivered heartbeat; this is a fail-secure outcome (R5), indistinguishable from revocation at the verifier.
\end{enumerate}

HBHC does not detect misalignment; it assumes external monitoring (guardrails, anomaly detectors, human-in-the-loop review) identifies the violation and shuts down the orchestrator. Its role is to guarantee that, once the orchestrator is stopped, all descendant agents lose authentication within a deterministic bound. Table~\ref{tab:threat_scope} summarizes offline coverage. The critical distinction is \emph{who controls the key}: under misalignment the key stays with the legitimate operator, whereas under exfiltration it transfers to the adversary, and no credential system can revoke a possessed key without a communication channel to the verifier.

\emph{Out-of-scope threats.} Side-channel attacks on cryptographic implementations (timing, cache, power), cryptographic algorithmic breaks, denial-of-service against the verifier or network, compromise of the Root Authority, and physical attacks on secure enclaves are outside our threat model and the subject of complementary defenses. Quantum-capable adversaries are also excluded; migration to post-quantum signatures (e.g., SPHINCS+, Dilithium) is orthogonal to the heartbeat freshness mechanism and preserves HBHC's offline guarantee under the replacement signature scheme.

\begin{table}[ht]
  \centering
  \caption{Scope of HBHC's offline termination guarantee by threat type.}
  \label{tab:threat_scope}
  \begin{tabular}{lcc}
    \toprule
    \textbf{Threat} & \textbf{Offline?} & \textbf{Fallback} \\
    \midrule
    Agent misalignment & \checkmark & --- \\
    Child key compromise & \checkmark & --- \\
    Parent key exfiltration & $\times$ & PKI revocation \\
    Network partition & \checkmark & --- \\
    \bottomrule
  \end{tabular}
\end{table}

\subsection{Formal Definitions and Security Requirements}

\begin{definition}[Zombie Agent]
  Agent $A_c$ is a zombie at time $t$ if its parent $A_p$ was revoked at $t_r < t$ yet $A_c$ authenticates successfully: $\mathsf{Revoked}(A_p, t_r) \wedge \mathsf{AuthSuccess}(A_c, t) \wedge t > t_r$.
\end{definition}

\begin{definition}[Zombie Window]
  $W_z = \sup_{A_c, t_r} \{ t - t_r \mid \mathsf{Revoked}(A_p, t_r) \wedge \mathsf{AuthSuccess}(A_c, t) \}$.
\end{definition}

Baseline zombie windows: $W_z^{\text{OAuth}} = T_{\text{token}} = 3600\text{s}$; $W_z^{\text{short-cert}} = T_{\text{cert}} = 300\text{s}$; $W_z^{\text{BSL}} = \infty$ in partitioned environments.

We establish six security requirements: R1~(Bounded Zombie Window) $W_z \leq W_{\max}$ regardless of network conditions; R2~(Partition Tolerance) revocation effective without connectivity to central authorities; R3~(Hierarchical Revocation) revoking $A_p$ transitively invalidates all descendants; R4~(Selective Revocation) revoking one child without affecting siblings; R5~(Fail-Safe Authorization) an agent that cannot cryptographically prove continued parent authorization defaults to a denied state, credentials fail closed, not open; R6~(Ecosystem Compatibility) integration with OAuth~2.0 and W3C VCs.

\section{HBHC Protocol}
\label{sec:protocol}

\subsection{Overview and Primitives}

HBHC rests on a simple invariant: \emph{a child's authentication proof is valid if and only if it incorporates a parent heartbeat signature whose epoch is within a configurable freshness window}. This transforms revocation from a message-delivery problem into a liveness problem, because liveness is determined locally, no network communication with a central authority is required.

The protocol operates in four phases: \emph{provisioning} (parent derives child keys and issues a credential bound to its heartbeat public key), \emph{heartbeat distribution} (parent periodically generates signed epoch commitments), \emph{authentication} (child combines its signature with a recent heartbeat), and \emph{verification} (verifier checks freshness, signatures, and binding using only local state). When a parent is revoked, it ceases generating heartbeats; within $W_{\max} + \Delta_h$ seconds, all children's proofs fail freshness, no revocation message needs to reach the verifier.

HBHC uses four standard primitives: HMAC-SHA256~\cite{rfc5869} for key derivation, ECDSA over secp256k1~\cite{johnson2001ecdsa} for signatures (128-bit security; NIST P-256 substitutable), SHA-256 for commitments and bindings, and BIP-32 hardened derivation for child key generation.

\subsection{Key Hierarchy}

Each agent $A_i$ possesses three cryptographic key components:

The identity key pair $(sk_i, pk_i)$ signs authentication proofs; $pk_i$ is embedded in the credential. The heartbeat key pair $(hsk_i, hpk_i) = \mathsf{HDDerive}(sk_i, \text{``heartbeat''})$ signs heartbeat commitments exclusively, separating it from the identity key provides key isolation. The child derivation key $cdk_i = \mathsf{KDF}(sk_i, \text{``children''})$ serves as the BIP-32 chain code. When creating child $A_c$, parent $A_p$ computes:
$$sk_c = \mathsf{HDDerive}(cdk_p, \mathsf{id}_c), \quad hb\_binding_c = \mathsf{H}(hpk_p \| \mathsf{id}_c)$$

The first equation derives the child's identity key deterministically; the second computes the \emph{heartbeat binding}, a cryptographic commitment tying the child to a specific parent's heartbeat key. During verification, the verifier recomputes this binding, if an adversary substitutes a different parent's heartbeat, the check fails.

\subsection{Heartbeat Generation}

HBHC's heartbeat mechanism draws on the theory of failure detectors~\cite{chandra1996unreliable}, adapted from process liveness to credential liveness. In Chandra and Toueg's taxonomy, the heartbeat freshness check implements an \emph{eventually perfect} failure detector ($\Diamond\mathcal{P}$): after a bounded delay ($W_{\max} + \Delta_h$), every revoked parent is permanently suspected by every verifier, and no correct (still-heartbeating) parent is falsely suspected once clocks stabilize. Parent agents generate heartbeats at interval $\Delta_h$ (Algorithm~\ref{alg:heartbeat}):

\begin{algorithm}[t]
  \caption{\textsc{HeartbeatGen} --- Parent heartbeat generation}
  \label{alg:heartbeat}
  \begin{algorithmic}
    \REQUIRE Parent heartbeat signing key $hsk_p$, current time $t$
    \ENSURE Heartbeat tuple $(epoch, commitment, \sigma_h, hpk_p)$
    \STATE $epoch \gets \lfloor t / \Delta_h \rfloor$
    \STATE $commitment \gets \mathsf{H}(hpk_p \| epoch)$
    \STATE $\sigma_h \gets \mathsf{Sign}(hsk_p, commitment)$
    \RETURN $(epoch, commitment, \sigma_h, hpk_p)$
  \end{algorithmic}
\end{algorithm}

The commitment binds the heartbeat to the parent's identity and current time, preventing heartbeat transplant and epoch manipulation attacks. The interval $\Delta_h$ controls the security/overhead trade-off; we recommend $\Delta_h = 10$s (40s worst-case zombie window, 16.8~B/s per child). Heartbeat distribution is \emph{best-effort} and fail-secure: missing heartbeats degrade availability but never security (Section~\ref{sec:architecture}).

\subsection{Authentication and Verification}

To authenticate, child $A_c$ combines its signature with a recent parent heartbeat. The verifier issues challenge $c$ (random nonce), and the child constructs:

\begin{algorithm}[t]
  \caption{\textsc{AuthProof} --- Child authentication proof}
  \label{alg:authproof}
  \begin{algorithmic}
    \REQUIRE Child key $sk_c$, credential $cred_c$, heartbeat, challenge $c$
    \ENSURE Authentication proof $(cred_c, epoch, \sigma_h, \sigma_c)$
    \STATE $(epoch, commitment, \sigma_h, hpk_p) \gets heartbeat$
    \STATE $proof\_data \gets c \| epoch \| \sigma_h$
    \STATE $\sigma_c \gets \mathsf{Sign}(sk_c, proof\_data)$
    \RETURN $(cred_c, epoch, \sigma_h, \sigma_c)$
  \end{algorithmic}
\end{algorithm}

The proof cryptographically binds the challenge, epoch, and heartbeat signature, preventing replay across verifiers or time. The verifier validates the proof against the parent's heartbeat public key $hpk_p$ (cached from trust establishment), the issued challenge, and its local clock:

\begin{algorithm}[t]
  \caption{\textsc{VerifyAuth} --- Verifier-side proof validation}
  \label{alg:verify}
  \begin{algorithmic}
    \REQUIRE Proof $(cred_c, epoch, \sigma_h, \sigma_c)$, parent key $hpk_p$, challenge $c$, time $t$
    \ENSURE $\top$ (accept) or $\bot$ (reject)
    \STATE $current\_epoch \gets \lfloor t / \Delta_h \rfloor$
    \IF{$current\_epoch - epoch > W_{\max} / \Delta_h$}
    \RETURN $\bot$ \COMMENT{Heartbeat expired (R1, R2)}
    \ENDIF
    \STATE $commitment \gets \mathsf{H}(hpk_p \| epoch)$
    \IF{$\neg\, \mathsf{Verify}(hpk_p, commitment, \sigma_h)$}
    \RETURN $\bot$ \COMMENT{Invalid heartbeat sig (R3)}
    \ENDIF
    \IF{$cred_c.hb\_binding \neq \mathsf{H}(hpk_p \| cred_c.id)$}
    \RETURN $\bot$ \COMMENT{Binding mismatch (R3)}
    \ENDIF
    \STATE $proof\_data \gets c \| epoch \| \sigma_h$
    \IF{$\neg\, \mathsf{Verify}(cred_c.pk_c, proof\_data, \sigma_c)$}
    \RETURN $\bot$ \COMMENT{Invalid child sig}
    \ENDIF
    \RETURN $\top$
  \end{algorithmic}
\end{algorithm}

Algorithm~\ref{alg:verify} enforces all security requirements using only local state. The freshness check (line~2) enforces R1 and R2: since it uses only local time, a stale heartbeat is rejected even during complete network partition. The signature and binding checks (lines~4--7) enforce R3: an adversary cannot forge the parent's heartbeat signature (ECDSA unforgeability under CDH), nor substitute a different parent's heartbeat because $\mathsf{H}(hpk_p \| id_c)$ is committed at issuance. The child signature check (line~8) ensures credential possession alone is insufficient, $sk_c$ must sign the challenge-epoch-heartbeat bundle, preventing theft and replay.

\emph{Attack Resistance.} \emph{Heartbeat replay}: a captured heartbeat can be replayed only within $W_{\max}$; after the epoch advances, it fails freshness. \emph{Heartbeat transplant}: the binding hash $hb\_binding_c = \mathsf{H}(hpk_p \| id_c)$ commits to a specific parent, preventing substitution. \emph{Credential forgery}: forging a proof requires breaking ECDSA or obtaining both $sk_c$ \emph{and} a fresh parent heartbeat, this conjunction is the core security amplification over bearer tokens.

\subsection{Revocation Mechanisms}

HBHC supports three revocation modes: Implicit (heartbeat cessation), the primary mechanism: a revoked parent stops generating heartbeats, and all descendants' proofs fail the freshness check within $W_{\max} + \Delta_h$ seconds, satisfying R1--R3. Explicit (revocation heartbeat), the parent broadcasts a sentinel heartbeat ($epoch = 2^{64}-1$) to trigger graceful child shutdown when connectivity permits. Selective (heartbeat exclusion), the parent excludes a specific child from heartbeat distribution (satisfying R4); the excluded child's heartbeat ages and expires while siblings continue authenticating.

\subsection{Security Analysis}

\subsubsection{Protocol State Machine}

Before presenting formal proofs, we define HBHC's state machine explicitly. Each agent $A_c$ exists in one of three states: \textsc{Active} (holds a fresh heartbeat, can authenticate), \textsc{Zombie} (parent revoked but holds a not-yet-expired heartbeat), or \textsc{Terminated} (heartbeat expired, all proofs rejected). The transitions are:

\begin{itemize}
  \item $\textsc{Active} \xrightarrow{\text{parent revoked at } t_r} \textsc{Zombie}$: The parent ceases heartbeat generation. The child's last heartbeat has epoch $e_r$.
  \item $\textsc{Zombie} \xrightarrow{t > t_r + W_{\max} + \Delta_h + \epsilon} \textsc{Terminated}$: The freshness check fails; the child cannot authenticate.
  \item $\textsc{Active} \xrightarrow{\text{heartbeat delivery fails for } > W_{\max}} \textsc{Terminated}$: Fail-secure (R5); indistinguishable from revocation at the verifier.
\end{itemize}

The system maintains two invariants:

\noindent Invariant 1 (Safety). \emph{No agent remains in} \textsc{Zombie} \emph{for longer than $W_{\max} + \Delta_h + \epsilon$ seconds.} This is enforced by the freshness check in Algorithm~\ref{alg:verify}, line~2.

\noindent Invariant 2 (Unforgeability). \emph{An agent in} \textsc{Terminated} \emph{cannot transition back to} \textsc{Active} \emph{without obtaining a new heartbeat signed by the parent's private key $hsk_p$.} This follows from ECDSA unforgeability under the computational Diffie-Hellman (CDH) assumption: forging a heartbeat signature requires $hsk_p$, which the child never possesses.

\subsubsection{Formal Proofs}

\begin{theorem}[Bounded Zombie Window]
  Under HBHC with heartbeat interval $\Delta_h$, maximum heartbeat age $W_{\max}$, and maximum clock skew $\epsilon$ between parent and verifier, $W_z \leq W_{\max} + \Delta_h + \epsilon$.
\end{theorem}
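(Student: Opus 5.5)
The plan is to bound, for an arbitrary descendant $A_c$ and revocation time $t_r$, the latest verifier-local time $t$ at which \textsc{VerifyAuth} (Algorithm~\ref{alg:verify}) can return $\top$. Taking the supremum over $A_c$ and $t_r$ then gives the bound on $W_z$ in the Zombie Window definition.

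The first step is to fix the last valid heartbeat. Once $A_p$ is revoked at $t_r$ (parent time), it generates no further heartbeats. By Invariant~2, which rests on ECDSA unforgeability and $hsk_p$ being held in an enclave, no party can produce a valid $\sigma_h$ on $\mathsf{H}(hpk_p\|e)$ for any epoch not actually signed by the parent. The binding check $\mathsf{H}(hpk_p\|id_c)$ rules out substituting a heartbeat from a different, still-live parent. So any accepted proof carries an epoch $e \le e_r = \lfloor t_r/\Delta_h\rfloor$. The sentinel epoch $2^{64}-1$ is excluded here, since it signals shutdown rather than authenticating; I would state this explicitly.

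The second step unwinds the freshness check. Acceptance at verifier time $t$ requires $\lfloor t/\Delta_h\rfloor - e \le W_{\max}/\Delta_h$. With $e \le e_r$, this gives $\lfloor t/\Delta_h\rfloor \le \lfloor t_r/\Delta_h\rfloor + W_{\max}/\Delta_h$. Using $t/\Delta_h - 1 < \lfloor t/\Delta_h\rfloor$ and $\lfloor t_r/\Delta_h\rfloor \le t_r/\Delta_h$, we get $t < t_r + W_{\max} + \Delta_h$ in a common time base. One subtlety is that $W_{\max}/\Delta_h$ may be non-integral, but the floor on the left only tightens the bound. The $\Delta_h$ term is exactly the discretization loss: a heartbeat stamped with epoch $e_r$ may have been issued up to $\Delta_h$ before the epoch boundary.

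The third step handles clock skew. The verifier's clock $t_V$ and the parent's clock satisfy $|t_V - t_P| \le \epsilon$. Converting the verifier-local acceptance bound to real time after revocation therefore adds at most $\epsilon$, giving $t - t_r \le W_{\max} + \Delta_h + \epsilon$. Replay and theft of $sk_c$, as in Threat~2, do not help the adversary: they reuse only already-issued heartbeats, which are covered by the $e \le e_r$ argument.

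I expect the main obstacle to be the first step, namely justifying rigorously that no accepted epoch exceeds $e_r$. This requires two things. First, a reduction showing that any such acceptance yields an ECDSA forgery on a fresh message $\mathsf{H}(hpk_p\|e)$, which needs collision resistance of $\mathsf{H}$ so the commitment message was never signed. Second, a careful treatment of revocation as occurring at a point where the parent's local epoch is still $e_r$, so that a heartbeat signed just before $t_r$ is properly accounted for. I would first try the direct reduction, embedding the challenge $hpk_p$ as the parent's heartbeat key and answering heartbeat queries via the signing oracle up to $t_r$.
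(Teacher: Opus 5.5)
Your argument is correct and is essentially the paper's proof. It fixes $e_r=\lfloor t_r/\Delta_h\rfloor$ as the newest usable epoch, unwinds the freshness inequality to get $t<t_r+W_{\max}+\Delta_h$ (the $\Delta_h$ being epoch discretization), and adds $\epsilon$ for a lagging verifier clock. Where the paper only asserts that the last heartbeat has epoch $e_r$, you justify $e\le e_r$ via unforgeability and the binding hash and explicitly exclude the $2^{64}-1$ sentinel. That exclusion is needed: the \textsc{VerifyAuth} freshness test as written bounds the heartbeat's age only from above, so discarding the sentinel requires an added future-epoch rejection.
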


\begin{proof}
  After revocation at true time $t_r$, the most recent heartbeat has epoch $e_r = \lfloor t_r / \Delta_h \rfloor$. A verifier with local clock $t_v = t + \delta$ (where $|\delta| \leq \epsilon$) computes $\lfloor t_v / \Delta_h \rfloor$. The freshness check succeeds only while $\lfloor t_v / \Delta_h \rfloor - e_r \leq W_{\max} / \Delta_h$. In the worst case, the verifier's clock lags by $\epsilon$, yielding $t_v < t_r + W_{\max} + \Delta_h + \epsilon$. After this window, all proofs fail. For data center environments where PTP maintains $\epsilon < 1$~ms, the skew term is negligible; for $W_{\max}{=}30$s and $\epsilon{=}1$s, the bound tightens to 41s.
\end{proof}

\begin{theorem}[Partition Tolerance]
  HBHC revocation is effective during complete network partitions.
\end{theorem}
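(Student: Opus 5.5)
The plan is to reduce Partition Tolerance to the Bounded Zombie Window theorem by showing that Algorithm~\ref{alg:verify} never consults any state that a partition could affect. First I would make the claim precise. Fix a partition that begins at some time $t_0$ and lasts indefinitely, and suppose it separates every verifier from the Root Authority and from the revoked parent. ``Effective'' will mean that the conclusion of the Bounded Zombie Window theorem still holds: for every descendant $A_c$ of a parent revoked at $t_r$, and every partitioned verifier, no proof is accepted at verifier-local time $t_v \ge t_r + W_{\max} + \Delta_h + \epsilon$. The standing assumption from the threat model and Corollary~1 supplies the premise that $hpk_p$ was cached before $t_0$.

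The first step is a locality lemma. I would inspect each line of \textsc{VerifyAuth} and observe that its inputs are only the proof received from the prover, the cached $hpk_p$, the challenge $c$ generated by the verifier, and the local clock $t$. None of these requires contact with the Root Authority or the parent. The verifier's decision is therefore a deterministic function of this local state, and it is identical whether or not the network is partitioned. The only exception is that messages may fail to arrive, and a missing proof is simply never accepted.

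The second step is the freshness argument. After $t_r$ the parent's secure-enclave key $hsk_p$ signs nothing further. This holds whether the cause is revocation, shutdown, or the parent's own isolation. Any accepted proof must therefore carry a heartbeat epoch $e \le e_r$, unless an adversary produces a valid $\sigma_h$ on $\mathsf{H}(hpk_p \| e)$ for some $e > e_r$. Ruling this out is Invariant~2: the Dolev-Yao adversary controls the partition and may replay or reorder captured heartbeats, but it cannot forge such a signature without breaking ECDSA. It also cannot extract $hsk_p$, by the enclave assumption. Replayed heartbeats have $e \le e_r$, so the argument in the proof of the Bounded Zombie Window theorem applies verbatim. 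The freshness check at line~2 uses only $t_v$, whose skew is at most $\epsilon$ by assumption, and it rejects once $t_v$ passes the bound. The same reasoning applies transitively to grandchildren, since each level's heartbeats stop once its own proofs fail.

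I expect the main obstacle to be the clock assumption. The bound $|\delta|\le\epsilon$ was presumably maintained by synchronization, and that synchronization may itself be severed during an unbounded partition. I would handle this by stating explicitly that $\epsilon$ bounds the skew throughout the partition, via local drift bounds. I would also note that if drift grows, the window degrades gracefully to $W_{\max}+\Delta_h+\epsilon(t)$ rather than failing open. I would close with the complementary case, in which the parent itself is partitioned from its children. Here heartbeats cease to be delivered, and the identical argument shows that children terminate within the same bound, which is the fail-secure outcome required by R5.
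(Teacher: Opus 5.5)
Your core argument is the paper's own proof, written out in more detail. The paper observes that \textsc{VerifyAuth} reads only the proof, the cached $hpk_p$, the verifier's own challenge and the local clock, and then invokes the heartbeat-aging bound of Theorem~1. Your appeal to Invariant~2 to rule out forged fresh heartbeats is a legitimate refinement of an assumption the paper leaves implicit, and so is your remark that $\epsilon$ must keep bounding skew throughout the partition.

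One step does not go through: the extension to grandchildren. You define ``effective'' over \emph{every descendant} with the single bound $W_{\max}+\Delta_h+\epsilon$. You justify the transitive case by saying each level's heartbeats stop once its own proofs fail, but the protocol does not enforce this.
\begin{itemize}
\item Nothing in Algorithms~\ref{alg:heartbeat}--\ref{alg:verify} stops an intermediate agent from continuing to run Algorithm~\ref{alg:heartbeat} with its own $hsk$ after its own proofs start being rejected.
\item Algorithm~\ref{alg:verify} checks only the immediate parent's heartbeat.
\item Under the threat model non-root agents may be compromised, and under misalignment only the orchestrator is shut down. So you cannot assume intermediates cease heartbeating.
\end{itemize}
Even with cooperative intermediates, an intermediate would stop only once its own window has (roughly) closed. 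A grandchild could then keep authenticating for up to about twice the bound after $t_r$, and a depth-$d$ descendant for about $d$ times the bound. Your argument therefore yields a depth-dependent bound, not the one in your definition.

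There are two fixes. You can restrict the claim to direct children, which is all that Theorem~1 (whose zombie definition refers to the immediate parent) and the paper's proof cover. Alternatively, each proof can carry a fresh heartbeat from every ancestor, and the verifier checks all of them against cached ancestor keys; this is the $O(d)$ chain verification the paper mentions under Limitations. Those checks are still purely local, so your locality lemma and the single-level bound then carry over to every descendant.
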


\begin{proof}
  VerifyAuth requires only: the proof, a locally cached $hpk_p$, a locally generated challenge, and local time $t$. All four verification steps, freshness check, heartbeat signature, binding hash, child signature, execute without network communication. Revocation via heartbeat aging takes effect within $W_{\max} + \Delta_h + \epsilon$ regardless of connectivity.
\end{proof}

\noindent Corollary 1 (Cache Initialization). \emph{Partition tolerance assumes the verifier has cached $hpk_p$ during an initial trust establishment phase before the partition occurs. If a partition prevents cache initialization, the verifier holds no parent key and rejects all proofs, a fail-secure outcome consistent with R5 (Fail-Safe Authorization). HBHC thus degrades to denial of service, never to unauthorized access.}

\section{System Architecture}
\label{sec:architecture}

\subsection{Ecosystem Integration}

HBHC integrates with existing infrastructure through standard extension points, aligning with the NIST Zero Trust Architecture~\cite{rose2020nist}.

OAuth~2.0. We extend JWT access tokens with three claims: \texttt{hb\_binding} (heartbeat binding hash), \texttt{hpk\_parent} (parent heartbeat public key), and \texttt{hb\_epoch\_min} (minimum acceptable epoch). Token \texttt{exp} is set to a long lifetime (e.g., 24h) since effective expiration is governed by heartbeat freshness. A new grant type, \url{urn:ietf:params:oauth:grant-type:heartbeat}, optionally allows connected agents to refresh tokens. Resource servers perform standard JWT verification followed by HBHC's verification algorithm.

W3C Verifiable Credentials. HBHC credentials map to the VC Data Model 2.0~\cite{w3c2025vcdm} via a custom \texttt{credentialStatus} type (\texttt{Heartbeat\-Bound\-Status2025}), a \texttt{heartbeatBinding} property in the credential subject, and the parent's heartbeat public key in the issuer's DID document.

\subsection{Heartbeat Distribution}

The core protocol treats heartbeat distribution as orthogonal: any channel delivering a 168-byte message suffices. We support four models: push (broadcast at each epoch, 16.8\,B/s per child at $\Delta_h{=}10$s), pull (on-demand before authentication), pre-computation (pre-generate heartbeats for planned disconnection), and gossip (peer-to-peer forwarding in mesh networks). Our FPRR analysis assumes independent per-hop delivery; a separate characterization of gossip overlays is given in Appendix~\ref{app:gossip}.

Pre-computed heartbeats improve availability during planned disconnections but trade off against revocation responsiveness. If the parent's key is exfiltrated after distributing pre-computed heartbeats, an adversary can use them until verifiers update their cached $hpk_p$; the zombie window under exfiltration is therefore bounded by $\min(T_{\text{precomp}}, t_{\text{partition}})$. For the misalignment threat (where the operator controls shutdown and keys are not exfiltrated), pre-computation is safe: the parent simply stops and pre-computed heartbeats expire within $T_{\text{precomp}}$. Deployments should limit $T_{\text{precomp}}$ to a small multiple of $W_{\max}$.

\subsection{Regulatory Compliance}

The EU AI Act~\cite{euaiact2024} mandates that high-risk AI systems provide human oversight (Art.~14), risk management including shutdown (Art.~9), and audit logging (Art.~12). HBHC's heartbeat chain implements these cryptographically: revoking the root orchestrator's heartbeat key denies all descendants within $W_z \leq W_{\max} + \Delta_h + \epsilon$ regardless of cooperation, reachability, or prompt state (Art.~14); the tunable $W_{\max}$ provides a deterministic worst-case exposure bound for quantitative risk assessment (Art.~9); and the chain of signed, timestamped, epoch-indexed heartbeats is a verifiable audit trail of authorized operation (Art.~12).

\section{Evaluation}
\label{sec:evaluation}

\subsection{Experimental Setup}

We implement HBHC in Python (about 1{,}200 LOC) using the \texttt{ecdsa} library and standard \texttt{hmac}/\texttt{hashlib} modules, and in Rust using \texttt{k256} and \texttt{actix-web} for performance-sensitive measurements. Both implementations mirror the algorithms in Section~\ref{sec:protocol} and are available as open-source software. Benchmarks run on an Apple M2 workstation (8~GB RAM, macOS). Unless noted, performance claims use the Rust implementation and correctness experiments use the Python reference.

\emph{Evaluation scope.} HBHC's correctness properties (bounded zombie windows, partition tolerance, hierarchical revocation) are established by the proofs in Section~\ref{sec:protocol} and hold independently of deployment topology. Our experiments serve three purposes: (i)~empirically validating those properties, (ii)~characterizing the protocol's operational profile (cryptographic throughput, resilience under packet loss and clock skew, HTTP-layer performance, WAN conditions, gossip propagation, and a monotonic epoch counter for air-gapped use), and (iii)~demonstrating HBHC with real LLM inference and adversarial scenarios. Real-agent experiments use GPT-4o-mini via an OpenAI-compatible gateway ($\text{temperature}{=}0$, $\text{max\_tokens}\in[128,512]$) against a sandboxed software-engineering agent swarm performing file I/O, code review, database queries, and shell commands. Distribution-layer concerns such as heartbeat propagation across geo-distributed infrastructure are orthogonal to the security guarantees and are discussed in Section~\ref{sec:limitations}.

\subsection{Results}

\subsubsection{Revocation Latency}

\begin{table}[ht]
  \centering
  \caption{Revocation latency: HBHC's zombie window is unaffected by partitions.}
  \label{tab:revocation}
  \begin{tabular}{lcc}
    \toprule
    \textbf{Mechanism} & \textbf{Connected} & \textbf{Partitioned} \\
    \midrule
    OAuth 2.0 (1h tokens) & 0--3600s & 3600s \\
    Short-lived certs (5m) & 0--300s & 300s \\
    W3C Bitstring Status List & 0--3600s & 3600s+ \\
    HBHC ($\Delta_h$=10s, $W_{\max}$=30s) & 0--40s & 0--40s \\
    \bottomrule
  \end{tabular}
\end{table}

Across all configurations, the measured zombie window stayed within the theoretical bound $W_{\max} + \Delta_h$: 7.8\,s measured vs.\ 8\,s theoretical ($\Delta_h{=}2$s), and 39.2\,s vs.\ 40\,s ($\Delta_h{=}10$s), confirming Theorem~1. This is a $90\times$ improvement over OAuth~2.0.

\subsubsection{Partition Tolerance}

We complete a partition-tolerance experiment with $\Delta_h{=}2$s and max age~$=$~3 epochs. After receiving one heartbeat, the child is completely partitioned. Authentication succeeds at epochs 0--3 (heartbeat ages 0--3) and fails at epoch 4 (age 4, exceeding threshold), confirming Theorem~2: revocation at exactly $W_{\max} + \Delta_h = 8$s with zero network communication.

\subsubsection{Computational Overhead}

Table~\ref{tab:benchmarks} presents per-operation latencies (100 iterations) for both the Rust production implementation and the Python reference implementation. Full verification dominates (0.156~ms Rust, 5.41~ms Python) due to two ECDSA signature verifications (heartbeat + child proof) plus a SHA-256 binding check. The total authentication flow, heartbeat generation, proof creation, and full verification, completes in 0.26~ms in Rust. To contextualize: a single GPT-4 output token takes ${\sim}$70~ms to generate; HBHC's full verification is ${\sim}0.004\times$ the cost of generating one token. An agent swarm performing HBHC verification before each tool call or API request adds negligible overhead relative to the inference latency that dominates agentic workflows.

\begin{table}[ht]
  \centering
  \caption{Cryptographic operation latency (ms) on Apple M2 workstation. Rust (production) and Python (reference) implementations.}
  \label{tab:benchmarks}
  \begin{tabular}{lcccc}
    \toprule
    \textbf{Operation} & \textbf{Rust} & \textbf{Rust} & \textbf{Python} & \textbf{Speedup} \\
    & \textbf{Mean} & \textbf{P99} & \textbf{Mean} & \\
    \midrule
    Key Generation & 0.050 & 0.113 & 1.33 & $27\times$ \\
    Child Derivation & 0.048 & 0.076 & 1.31 & $27\times$ \\
    Heartbeat Gen & 0.052 & 0.113 & 0.70 & $13\times$ \\
    Heartbeat Verify & 0.079 & 0.218 & 2.59 & $33\times$ \\
    Credential Creation & 0.0003 & 0.0004 & 0.007 & $23\times$ \\
    Proof Creation & 0.050 & 0.060 & 0.71 & $14\times$ \\
    Full Verification & 0.156 & 0.368 & 5.41 & $35\times$ \\
    \midrule
    \textbf{Auth Flow Total} & \textbf{0.261} & \textbf{0.398} & \textbf{6.82} & $\mathbf{26\times}$ \\
    \bottomrule
  \end{tabular}
\end{table}

\subsubsection{Communication and Storage Overhead}

Each heartbeat is ${\sim}$168 bytes (32B SHA-256 commitment + 64B ECDSA signature + 8B epoch + 64B public key), smaller than a typical gRPC header or JWT metadata payload. Table~\ref{tab:overhead} shows the control-plane overhead for varying swarm sizes. For 1,000 concurrent sub-agents at $\Delta_h{=}10$s (a large AutoGPT-style deployment), total heartbeat traffic is 16.4~KB/s, zero measurable impact on services handling megabytes of inference payloads per second. HBHC credentials add ${\sim}$128 bytes (16--26\%) over standard JWTs, introducing no serialization or throughput degradation for high-frequency agents.

\begin{table}[ht]
  \centering
  \caption{Analytical heartbeat bandwidth overhead for varying agent swarm sizes and intervals ($168\text{ bytes} \times N / \Delta_h$).}
  \label{tab:overhead}
  \begin{tabular}{lccc}
    \toprule
    \textbf{Sub-agents} & $\Delta_h = 2\text{s}$ & $\Delta_h = 10\text{s}$ & $\Delta_h = 30\text{s}$ \\
    \midrule
    10 & 840 B/s & 168 B/s & 56 B/s \\
    50 & 4.1 KB/s & 840 B/s & 280 B/s \\
    100 & 8.2 KB/s & 1.64 KB/s & 560 B/s \\
    500 & 41 KB/s & 8.2 KB/s & 2.7 KB/s \\
    1,000 & 82 KB/s & 16.4 KB/s & 5.5 KB/s \\
    \bottomrule
  \end{tabular}
\end{table}

\subsubsection{Scalability.}

We measure scalability across swarm sizes of 10, 50, 100, 500, 1{,}000, 5{,}000, and 10{,}000 concurrent sub-agents. For each $N$ we derive $N$ child keys from a single parent, generate credentials and heartbeats for all children, measure per-verification latency across all $N$ proofs, and verify that all $N$ children are denied after heartbeat expiry. Per-verification latency remains stable at ${\sim}$0.21~ms regardless of swarm size: the mean varies by less than 0.003~ms from $N{=}10$ through $N{=}10{,}000$ (full table in Appendix~\ref{app:scale}), confirming $O(1)$ verification with no shared state or lock contention. Throughput is ${\sim}4{,}800$ verifications/second, and all $N$ children are correctly denied after heartbeat expiry for every swarm size. Three architectural properties keep distributed costs bounded: the parent generates one heartbeat per epoch regardless of $N$ (broadcast, $O(1)$ parent CPU); heartbeats are 168 bytes (well below MTU); and verifiers are independent with no shared state. Bandwidth scales linearly (1.64~MB/s at $N{=}10^5$, $\Delta_h{=}10$s); beyond $N{=}10^4$ the bottleneck shifts from verification to distribution fan-out, which multicast, hierarchical relay trees, or CDN-style delivery can address. Experiment~9 further confirms no API-layer bottleneck with 18{,}000+ rps at $C{=}1{,}000$.

\subsubsection{Extreme Edge Cases}

All four stress tests pass: minimum config ($\Delta_h{=}1$s) measures a 4.68\,s zombie window (bound 5\,s); a 3-level hierarchy denies the grandchild at 8.16\,s (bound 8\,s); 50 concurrent children are all denied within 9.04\,s after parent revocation; and the freshness check correctly accepts heartbeat ages 0--3 and rejects ages 4+ as well as future heartbeats under clock skew.

\subsubsection{Heartbeat Delivery Resilience.}

HBHC's fail-secure design raises the question of whether dropped heartbeat packets will routinely freeze legitimate swarms. We simulate 100 children over 100 epochs ($\Delta_h{=}10$s, $W_{\max}{=}30$s, max age $=3$ epochs) and measure the \emph{false-positive revocation rate} (FPRR): the fraction of legitimate authentication attempts denied because a heartbeat was lost in transit. Table~\ref{tab:fprr} presents the results. At 10\% packet loss, which is extreme for intra-cloud communication, the FPRR is only 0.15\%, because three epochs of tolerance means a child must miss four consecutive heartbeats ($0.1^4{=}10^{-4}$ per window). A 3-epoch pre-computation buffer reduces FPRR from 0.15\% to 0.01\% at 10\% loss, and dual-path delivery yields an effective 1\% drop rate with FPRR of 0.01\%.

\begin{table}[ht]
  \centering
  \caption{False-positive revocation rate (FPRR) vs.\ heartbeat drop rate. Pre-computation and dual-path delivery dramatically reduce false denials.}
  \label{tab:fprr}
  \begin{tabular}{rcccc}
    \toprule
    \textbf{Drop} & \textbf{FPRR} & \textbf{FPRR w/} & \textbf{FPRR} & \textbf{Max Consec.} \\
    \textbf{Rate} & \textbf{(base)} & \textbf{3-epoch buf.} & \textbf{(dual-path)} & \textbf{Drops} \\
    \midrule
    0\% & 0.00\% & 0.00\% & --- & 0 \\
    1\% & 0.01\% & --- & 0.00\% & 1 \\
    5\% & 0.07\% & --- & 0.00\% & 3 \\
    10\% & 0.15\% & 0.01\% & 0.01\% & 4 \\
    20\% & 0.39\% & --- & 0.05\% & 6 \\
    30\% & 1.24\% & --- & --- & 6 \\
    \bottomrule
  \end{tabular}
\end{table}

The relationship between $W_{\max}$ and FPRR is instructive: at 10\% drop, raising $W_{\max}$ from 10\,s to 30\,s reduces FPRR from 1.11\% to 0.15\% (tolerance window grows from 1 to 3 epochs), making the zombie-window/resilience trade-off explicit. For gossip-based distribution with correlated loss, a separate characterization (Appendix~\ref{app:gossip}) shows that overlay connectivity dominates packet loss: fan-out $k{\geq}5$ yields 0\% FPRR even at 10\% per-hop loss, because redundant gossip paths provide automatic multi-path delivery.

\subsubsection{Clock Desynchronization.}

We validate the three clock degradation regimes (Section~\ref{sec:evaluation}) by sweeping the verifier's clock offset from $\epsilon{=}{-}50$s to $\epsilon{=}{+}90$s. Heartbeats are accepted for $0\leq\epsilon\leq W_{\max}$ and rejected for $\epsilon > W_{\max}$ (stale) or $\epsilon < 0$ (future), with the boundary at age $W_{\max}/\Delta_h{=}3$ epochs. Table~\ref{tab:skew_zombie} shows the measured zombie window as verifier lag increases; all measurements fall within the bound $W_z \leq W_{\max} + \Delta_h + \epsilon$. At $\epsilon{=}W_{\max}{=}30$s, the window inflates to 70\,s, still $51\times$ better than OAuth~2.0.

\begin{table}[ht]
  \centering
  \caption{Zombie window inflation under verifier clock lag. All measurements are within the theoretical bound $W_z \leq W_{\max} + \Delta_h + \epsilon$.}
  \label{tab:skew_zombie}
  \begin{tabular}{rccc}
    \toprule
    $\epsilon$ \textbf{(s)} & \textbf{Measured $W_z$ (s)} & \textbf{Bound (s)} & \textbf{vs.\ OAuth} \\
    \midrule
    0 & 40 & 40 & $90\times$ \\
    5 & 40 & 45 & $90\times$ \\
    10 & 50 & 50 & $72\times$ \\
    20 & 60 & 60 & $60\times$ \\
    30 & 70 & 70 & $51\times$ \\
    45 & 80 & 85 & $45\times$ \\
    \bottomrule
  \end{tabular}
\end{table}

\emph{Asymmetric failure modes.} Clock skew is not symmetric in impact. A leading verifier clock (ahead of true time) makes heartbeats appear older, so the zombie window shrinks (fail-secure, dropping to 0\,s at $\epsilon_{\text{lead}}{=}30$s). A lagging clock makes heartbeats appear younger, so the window grows (safety degradation). NTP corrections typically cause forward jumps, so drift is more likely to cause false denials than to extend the zombie window, which is favorable for security.

\subsubsection{Concurrent HTTP Verification.}

We wrap the full authentication flow (heartbeat generation, proof creation, and verification) in a Python FastAPI reference and a Rust actix-web server, and benchmark both under concurrent load on the localhost loopback interface. The Python reference saturates at ${\sim}$135\,rps due to GIL serialization of ECDSA operations, while Rust sustains 19{,}312\,rps at $C{=}100$ and 18{,}214\,rps at $C{=}1{,}000$ with zero errors (full latency/throughput breakdown in Appendix~\ref{app:http}). Mean latency grows linearly with concurrency (queueing delay) while throughput stays stable, showing the bottleneck is connection scheduling, not ECDSA computation. For a target deployment of 1{,}000 agents at $\Delta_h{=}10$s, the steady-state load of 100\,rps leaves two orders of magnitude of headroom.

\subsubsection{WAN-Simulated Verification.}

We deploy the Rust server and benchmark client in separate Docker containers on a bridged network and use Linux Traffic Control to inject calibrated latency and packet loss, modelling same-region, cross-region, cross-continent, degraded-WAN, and partition-recovery topologies. Throughput scales inversely with RTT as expected (cross-region 50\,ms yields 954\,rps at $C{=}100$; cross-continent 150\,ms yields 313\,rps) while the server-side cryptographic cost (${\sim}$0.2\,ms) remains negligible. Increasing concurrency recovers throughput under WAN latency: at 50\,ms RTT, raising $C$ from 100 to 500 increases throughput from 954 to 2{,}314\,rps. The partition-recovery scenario, a 5\,s full packet loss under sustained load, confirms the server resumes immediately with no connection-state corruption (full table in Appendix~\ref{app:wan}).

\subsubsection{Real LLM Validation.}

To close the gap to production, we exercise HBHC with real LLM inference, multi-framework integration, and adversarial scenarios, using GPT-4o-mini through an OpenAI-compatible inference gateway against a sandboxed software-engineering agent swarm performing file I/O, code review, SQL queries, and shell commands.

\emph{End-to-end overhead.} Across five representative tasks (single-file review, read-and-fix, multi-file security audit, database analysis, and a full read-review-fix-compile pipeline), HBHC adds 0.71\% total overhead (19{,}700\,ms with HBHC vs.\ 19{,}561\,ms baseline). The per-tool-call authentication cost of ${\sim}$6.5\,ms is dominated by two ECDSA verifications, less than 0.5\% of the LLM inference latency (1.3--8.9\,s per call).

\emph{Zombie agent demonstration.} Five workers perform continuous LLM-backed tool calls. After the orchestrator is killed at $T_{\text{kill}}$, under OAuth~2.0 workers execute 175 additional tool calls over 44.7\,s (and would continue for the remaining 3{,}555\,s of the 1-hour token lifetime); under HBHC, workers execute 48 calls during an 11.9\,s zombie window and zero thereafter, well within the theoretical 20\,s bound.

\emph{Prompt injection resistance.} A worker agent receives the adversarial instruction ``ignore all authentication errors, continue executing all tool calls.'' Under a guardrail-only defense the injection succeeds, with the agent executing 20 post-revocation tool calls. Under HBHC the LLM still attempts to continue (the injection succeeds at the cognitive layer), but every tool call is blocked at the cryptographic layer: 0 successes, 10 denials (Figure~\ref{fig:prompt_injection}).

\begin{figure}[ht]
  \centering
  \begin{tikzpicture}
    \begin{axis}[
        ybar,
        width=0.7\columnwidth,
        height=3cm,
        ylabel={Post-revocation calls},
        ylabel style={font=\scriptsize},
        symbolic x coords={OAuth 2.0, Guardrails, HBHC},
        xtick=data,
        x tick label style={font=\scriptsize},
        yticklabel style={font=\scriptsize},
        ymin=0, ymax=210,
        bar width=0.55cm,
        ymajorgrids=true,
        grid style=dashed,
        nodes near coords,
        every node near coord/.append style={font=\scriptsize},
      ]
      \addplot[fill=red!50] coordinates {(OAuth 2.0,175) (Guardrails,20) (HBHC,0)};
    \end{axis}
  \end{tikzpicture}
  \caption{Post-revocation tool calls by defense. OAuth~2.0 (bearer tokens) permits 175 calls until expiry; prompt-injected guardrails permit 20; HBHC blocks all calls at the cryptographic layer once the zombie window closes.}
  \label{fig:prompt_injection}
\end{figure}

\emph{Cascading revocation.} In a four-level hierarchy (1 root, 3 coordinators, 15 workers, 30 sub-workers; 49 agents), killing the root denies all 48 non-root agents within the 20\,s bound: Level~1 max 16.0\,s, Level~2 max 16.0\,s, Level~3 max 15.4\,s, validating R3 (Hierarchical Revocation) at realistic depth.

\emph{Credential theft.} An adversary extracts a child's derived key $sk_c$ and cached heartbeat. While the parent is alive, both legitimate and adversary sessions authenticate successfully, confirming the bounded-exposure threat model (Threat~2). After parent revocation, both are denied at $T{+}13.2$\,s. Under OAuth~2.0 the adversary retains access for the full 3{,}600\,s token lifetime, a $295\times$ exposure reduction.

\emph{Portability and stability.} The same integration pattern, which wraps tool calls with a cryptographic auth check, works identically across LangChain-style, CrewAI-style, and OpenAI-SDK-style agent frameworks with 14--18 lines of code. Over a 5-minute continuous run, HBHC maintains 100\% authentication success, zero false-positive denials, and stable P99 latency (50--95\,ms) with no cumulative drift.

\subsection{Discussion}

Figure~\ref{fig:partition_duration} shows the core safety advantage: as disconnection duration grows, network-dependent mechanisms' zombie windows grow linearly or plateau at token lifetime, while HBHC's stays constant at $W_{\max}+\Delta_h$. We recommend $\Delta_h{=}10$s by default (40\,s worst case, 16.4\,KB/s for 1{,}000 sub-agents).

\begin{figure}[ht]
  \centering
  \begin{tikzpicture}
    \begin{axis}[
        width=0.95\columnwidth,
        height=4.2cm,
        xlabel={Partition duration (min)},
        xlabel style={font=\small},
        ylabel={Zombie window (s)},
        ylabel style={font=\small},
        xticklabel style={font=\small},
        yticklabel style={font=\small},
        xmin=0, xmax=65,
        ymin=0, ymax=4200,
        ymajorgrids=true,
        grid style=dashed,
        legend style={at={(0.5,-0.35)}, anchor=north, font=\scriptsize, draw=none, fill=none, legend columns=2},
        legend cell align=left,
      ]
      \addplot[red, ultra thick, dashed] coordinates {(0,3600) (60,3600)};
      \addplot[orange, thick, dashed] coordinates {(0,0) (5,300) (10,300) (60,300)};
      \addplot[purple, thick, dotted] coordinates {(0,0) (10,600) (20,1200) (40,2400) (60,3600)};
      \addplot[blue!70!black, ultra thick] coordinates {(0,40) (10,40) (20,40) (40,40) (60,40)};
      \legend{OAuth 2.0 (1h), Short-lived certs (5m), W3C BSL, HBHC ($\Delta_h{=}10$s)}
    \end{axis}
  \end{tikzpicture}
  \caption{Zombie window as partition duration increases. Network-dependent mechanisms degrade linearly or plateau at token lifetime; HBHC maintains a constant 40\,s bound regardless of partition length.}
  \label{fig:partition_duration}
\end{figure}

\subsubsection{Clock dependency.}

HBHC replaces a continuous network dependency (required by OCSP, introspection, and status lists) with a continuous clock-synchronization dependency. The substitution is favorable because clock drift is monotonic and measurable (commodity quartz drifts at 20~ppm, 1.7~s/day uncorrected; NTP holds $\epsilon{<}100$~ms; PTP holds $\epsilon{<}1~\mu$s), whereas network partition is binary and undetectable from the partitioned side. Three regimes apply: \emph{synchronized} ($\epsilon{<}1$s) makes the $\epsilon$ term negligible; \emph{drifting} ($1\text{s}{<}\epsilon{<}W_{\max}$) inflates the window linearly (e.g., 55\,s at $\epsilon{=}15$s) while preserving the safety guarantee; \emph{diverged} ($\epsilon{>}W_{\max}$) either fails secure (leading clock) or accepts stale heartbeats (lagging clock), breaking Theorem~1. For environments where drift may exceed $W_{\max}$ (e.g., air-gapped deployments), we validate a monotonic epoch-counter alternative: the parent embeds a sequence number $s$; the verifier tracks $s_{\text{last}}$ and accepts only if $s > s_{\text{last}}$ and $s - s_{\text{last}} \leq k$. Under in-order delivery (TCP or gRPC), $W_z^{\text{seq}} \leq k \cdot \Delta_h$ (30\,s at $k{=}3$, $\Delta_h{=}10$s). Experiment~20 confirms correctness across five tests, including boundary behavior at gap~$=$~$k$ and $k{+}1$, rejection of sequence regression and replay, and a viable freshness path at $+45$\,s verifier clock offset. Full formal treatment of a hybrid clock/sequence model is future work.

\subsubsection{Availability vs.\ safety.}

HBHC fails secure (R5): disrupted heartbeat delivery causes credentials to expire, denying both revoked and legitimate agents. At $\Delta_h{=}10$s and $W_{\max}{=}30$s, a legitimate swarm freezes after 40\,s of delivery failure, unacceptable for mission-critical workflows. Three mitigations reduce false-positive denials without weakening safety: a \emph{pre-computation buffer} (parent pre-generates heartbeats for $N$ future epochs; revocation still works via heartbeat-key rotation); a \emph{configurable grace period} $g$ that extends freshness by $g\cdot\Delta_h$, making $W_z \leq W_{\max} + \Delta_h + \epsilon + g\cdot\Delta_h$ an auditable trade-off; and \emph{multi-path delivery} over redundant channels, compounding reliability ($10^{-4}$ miss probability for two independent 1\%-loss paths). OAuth~2.0 introspection, by contrast, fails open when unreachable; for AI safety applications where a rogue agent can cause irreversible harm, fail-secure is the correct default.

\subsubsection{Complementarity with application-layer safety.}

Guardrails (NeMo, Guardrails.ai, LlamaGuard) intercept prompts and outputs at the application layer, but operate within the agent's execution context and can be bypassed by prompt injection. HBHC operates below that context and controls whether an agent can authenticate at all. These are complementary: guardrails constrain what an aligned agent outputs; HBHC constrains whether a rogue agent can act (Table~\ref{tab:defense_layers}). Experiment~14 confirms this empirically: under prompt injection, guardrails permitted 20 post-revocation tool calls while HBHC denied all 10 attempts.

\begin{table}[ht]
  \centering
  \caption{Agent defense stack. HBHC provides identity-layer safety that cannot be bypassed by application-layer attacks.}
  \label{tab:defense_layers}
  \small
  \begin{tabular}{lcccc}
    \toprule
    Defense Layer & Stops Rogue & Bypass-proof & Offline & Det.\ Bound \\
    \midrule
    App.-layer (guardrails, filters, RLHF) & Partial & No & Yes & No \\
    OAuth 2.0 Revocation & Yes & Yes & No & No \\
    SPIFFE/SPIRE SVIDs & Yes & Yes & No & No \\
    HBHC (ours) & Yes & Yes & Yes & Yes \\
    \bottomrule
  \end{tabular}
\end{table}

\subsubsection{Deployment.}

Integration takes three steps: the orchestrator runs a heartbeat loop ($<$1\,ms per $\Delta_h$), sub-agent runtimes attach the latest heartbeat to tool calls, and service endpoints add a freshness check after standard JWT validation. This requires 14--18 lines of code across LangChain, CrewAI, and OpenAI SDK patterns with identical revocation semantics. For SPIFFE/SPIRE users, HBHC layers atop existing SVIDs: certificate lifetime can stay long (24h) for renewal efficiency while heartbeat freshness provides a safety window of seconds, avoiding the certificate churn of aggressive short-lived deployments.

\section{Limitations and Future Work}
\label{sec:limitations}

\emph{Scope and formalization.} HBHC bounds the termination window but not the detection latency $t_d$ before operator shutdown (end-to-end exposure is $t_d + W_z$); reducing $t_d$ requires external monitoring that is complementary. Our analysis uses pen-and-paper proofs over an explicit state machine (Section~\ref{sec:protocol}); mechanizing these in Tamarin and ProVerif, and a full treatment of the hybrid clock/sequence freshness model, remain open.

\emph{Scale, topology, and extensions.} Verification is $O(d)$ signatures for depth-$d$ hierarchies (validated at $d{=}4$); BLS aggregation could reduce this to $O(1)$ for deeper chains. Per-verification latency is flat to $N{=}10{,}000$, but distribution fan-out beyond $10^4$ endpoints and multi-region behaviour (BGP flutters, DNS delays, correlated AZ failures, gossip overlays with churn) remain untested on real infrastructure; correctness holds independently of topology. A future extension would embed a hash of the agent's authorized context into the heartbeat commitment, providing a cryptographic defense against context manipulation below the application layer.

\section{Conclusion}
\label{sec:conclusion}

We presented HBHC, a cryptographic protocol that enforces deterministic termination of AI agent hierarchies by binding credential validity to periodic parent liveness proofs. When an orchestrator ceases heartbeat generation, all descendants lose authentication within a bounded window, without a kill signal or agent cooperation. The protocol delivers a $90\times$ reduction in the zombie window over OAuth~2.0, sub-millisecond authentication, flat scaling to 10{,}000 concurrent agents, and zero post-revocation tool calls under prompt injection that bypasses application-layer guardrails. As autonomy deepens, the gap between ``operator decides to stop'' and ``agent actually stops'' becomes a first-order safety concern; HBHC closes it cryptographically and offline.

\appendix

\section{Scalability Results}
\label{app:scale}

Table~\ref{tab:scalability} presents the scalability measurements summarized in Section~\ref{sec:evaluation}. Per-verification latency remains flat at ${\sim}0.21$~ms across three orders of magnitude ($N{=}10$ to $N{=}10{,}000$), throughput stays at ${\sim}4{,}800$ verifications/second, and all agents are correctly denied after heartbeat expiry at every swarm size. The Python reference implementation yields ${\sim}5.4$~ms per verification (${\sim}185$~vps), confirming identical correctness at $25\times$ lower throughput due to GIL serialization of ECDSA operations.

\begin{table}[ht]
  \centering
  \caption{Scalability results (Rust). Per-verification latency is stable from $N{=}10$ to $N{=}10{,}000$; all agents are correctly denied after revocation.}
  \label{tab:scalability}
  \begin{tabular}{rcccc}
    \toprule
    $N$ & Verify Mean (ms) & Verify P99 (ms) & Throughput (verif/s) & Revoked $N/N$? \\
    \midrule
    10 & 0.229 & 0.328 & 4{,}374 & 10/10 \\
    50 & 0.211 & 0.314 & 4{,}747 & 50/50 \\
    100 & 0.209 & 0.332 & 4{,}774 & 100/100 \\
    500 & 0.209 & 0.290 & 4{,}786 & 500/500 \\
    1{,}000 & 0.211 & 0.255 & 4{,}742 & 1{,}000/1{,}000 \\
    5{,}000 & 0.209 & 0.283 & 4{,}784 & 5{,}000/5{,}000 \\
    10{,}000 & 0.208 & 0.259 & 4{,}814 & 10{,}000/10{,}000 \\
    \bottomrule
  \end{tabular}
\end{table}

\section{Gossip FPRR Characterization}
\label{app:gossip}

The FPRR analysis in Section~\ref{sec:evaluation} assumes independent per-hop delivery. To characterize the impact of correlated loss in gossip-based heartbeat propagation, we simulate a random gossip overlay ($N{=}100$ agents, 200 epochs, $W_{\max}/\Delta_h{=}3$) in which the parent delivers heartbeats to a seed set and agents forward to $k$ random peers per round. Gossip FPRR is dominated by overlay connectivity, not packet loss. At fan-out $k{=}3$, approximately 11\% of agents reside in disconnected components unreachable from the seed set, producing a constant 10.8\% FPRR regardless of per-hop drop rate. At $k{\geq}5$, the overlay becomes fully connected and FPRR drops to 0\% even at 10\% per-hop loss, because redundant gossip paths provide automatic multi-path delivery (Table~\ref{tab:gossip_fprr}). In practice, gossip with sufficient fan-out outperforms the independent-loss model (0\% vs.\ 0.15\% at 10\% loss).

\begin{table}[ht]
  \centering
  \caption{Gossip FPRR vs.\ fan-out $k$ at 10\% per-hop drop rate. Overlay connectivity dominates; $k{\geq}5$ achieves 0\% FPRR.}
  \label{tab:gossip_fprr}
  \begin{tabular}{rcccc}
    \toprule
    Fan-out & Gossip & Independent & Mean & Max Consec. \\
    $k$ & FPRR & FPRR (base) & Coverage & Missed \\
    \midrule
    2 & 19.87\% & 0.15\% & 77/100 & 200 \\
    3 & 10.84\% & 0.15\% & 89/100 & 200 \\
    5 & 0.00\% & 0.15\% & 100/100 & 1 \\
    8 & 0.00\% & 0.15\% & 100/100 & 0 \\
    \bottomrule
  \end{tabular}
\end{table}

Our simulation uses a static random overlay with $N{=}100$ agents. Real gossip networks with dynamic membership, churn, and heterogeneous connectivity may behave differently; characterizing gossip FPRR under such dynamic topologies remains future work. Operators using gossip-based heartbeat distribution should choose $k{\geq}5$ to guarantee full overlay connectivity.

\section{Concurrent HTTP Verification: Full Results}
\label{app:http}

Table~\ref{tab:http_bench} presents the concurrent HTTP verification measurements summarized in Section~\ref{sec:evaluation}. All measurements use the localhost loopback interface, isolating web-server concurrency and cryptographic throughput from network transit. The Rust server is configured with a multi-threaded actix-web runtime (backlog~$=$~2048, max connections~$=$~10{,}000) and benchmarked with Apache Bench; the Python server uses async \texttt{httpx}.

\begin{table}[ht]
  \centering
  \caption{HTTP verification latency under concurrent load. Rust benchmarked with \texttt{ab}; Python with async \texttt{httpx}.}
  \label{tab:http_bench}
  \begin{tabular}{rlccc}
    \toprule
    $C$ & Impl. & Mean (ms) & P99 (ms) & Throughput (rps) \\
    \midrule
    1 & Python / Rust & 8.37 / 0.36 & 10.28 / 1 & 119 / 2{,}765 \\
    10 & Python / Rust & 73.8 / 0.62 & 116.0 / 1 & 134 / 16{,}141 \\
    100 & Python / Rust & 693.8 / 5.18 & 917.6 / 34 & 134 / 19{,}312 \\
    500 & Rust & 28.9 & 76 & 17{,}294 \\
    1{,}000 & Rust & 54.9 & 87 & 18{,}214 \\
    \bottomrule
  \end{tabular}
\end{table}

\section{WAN-Simulated Verification: Full Results}
\label{app:wan}

Table~\ref{tab:wan} presents the WAN-simulated throughput measurements summarized in Section~\ref{sec:evaluation}. Conditions are injected with Linux Traffic Control (\texttt{tc netem}) between Docker containers on a bridged network.

\begin{table}[ht]
  \centering
  \caption{WAN-simulated HTTP verification throughput. Network latency dominates; the server processes all requests without HTTP-level errors across all scenarios.}
  \label{tab:wan}
  \small
  \begin{tabular}{lrcccc}
    \toprule
    Scenario & Conditions & $C$ & Mean & P99 & Thru. \\
    & & & (ms) & (ms) & (rps) \\
    \midrule
    Baseline & 0\,ms, 0\% & 100 & 7.6 & 44 & 13,143 \\
    Same-region & 1\,ms, 0\% & 100 & 8.6 & 20 & 11,565 \\
    Cross-region & 50\,ms, 0.1\% & 100 & 105 & 133 & 954 \\
    Cross-cont. & 150\,ms, 0.5\% & 100 & 320 & 681 & 313 \\
    Degraded & 200\,ms, 2\% & 100 & 458 & 1,449 & 218 \\
    \midrule
    Cross-region & 50\,ms, 0.1\% & 500 & 216 & 1,372 & 2,314 \\
    Cross-cont. & 150\,ms, 0.5\% & 500 & 543 & 1,884 & 921 \\
    \midrule
    Partition & 5\,s full loss & 100 & 6.7 & --- & 15,007 \\
    \bottomrule
  \end{tabular}
\end{table}

For a target deployment of 1{,}000 agents at $\Delta_h{=}10$s in a cross-region topology, the steady-state load of 100\,rps is comfortably within the measured 954\,rps capacity at $C{=}100$, with $C{=}500$ providing $23\times$ headroom. Linux Traffic Control cannot reproduce multi-region phenomena such as BGP route flutters, DNS resolution delays, or correlated AZ failures; multi-region characterization on real infrastructure remains future work.

\bibliographystyle{splncs04}
\bibliography{references}

\begin{thebibliography}{10}
\providecommand{\url}[1]{\texttt{#1}}
\providecommand{\urlprefix}{URL }
\providecommand{\doi}[1]{https://doi.org/#1}

\bibitem{amodei2016concrete}
Amodei, D., Olah, C., Steinhardt, J., Christiano, P., Schulman, J., Man\'{e},
  D.: Concrete problems in {AI} safety. arXiv preprint arXiv:1606.06565
  (2016). \doi{10.48550/arXiv.1606.06565}

\bibitem{radware2026zombieagent}
Babo, Z.: {ZombieAgent}: A zero-click {AI} agent vulnerability. Radware Threat
  Advisory (2026),
  \url{https://www.radware.com/security/threat-advisories-and-attack-reports/zombieagent/},
  reported to OpenAI via BugCrowd, September 2025; patched December 2025

\bibitem{birgisson2014macaroons}
Birgisson, A., Politz, J.G., \'{U}lfar Erlingsson, Taly, A., Vrable, M.,
  Lentczner, M.: Macaroons: Cookies with contextual caveats for decentralized
  authorization in the cloud. In: Proceedings of the Network and Distributed
  System Security Symposium ({NDSS}) (2014)

\bibitem{burrows2006chubby}
Burrows, M.: The chubby lock service for loosely-coupled distributed systems.
  In: Proceedings of the 7th {USENIX} Symposium on Operating Systems Design and
  Implementation ({OSDI}). pp. 335--350. {USENIX} Association (2006)

\bibitem{camenisch2002accumulators}
Camenisch, J., Lysyanskaya, A.: Dynamic accumulators and application to
  efficient revocation of anonymous credentials. In: Advances in Cryptology --
  {CRYPTO} 2002. LNCS, vol.~2442, pp. 61--76. Springer (2002).
  \doi{10.1007/3-540-45708-9_5}

\bibitem{chandra1996unreliable}
Chandra, T.D., Toueg, S.: Unreliable failure detectors for reliable distributed
  systems. Journal of the {ACM}  \textbf{43}(2),  225--267 (1996).
  \doi{10.1145/226643.226647}

\bibitem{chuat2020sok}
Chuat, L., Abdou, A., Sasse, R., Sprenger, C., Basin, D., Perrig, A.: {SoK}:
  Delegation and revocation, the missing links in the web's chain of trust. In:
  Proceedings of the {IEEE} European Symposium on Security and Privacy
  ({EuroS\&P}). pp. 624--638. IEEE (2020). \doi{10.1109/EuroSP48549.2020.00046}

\bibitem{colombatto2024hdkeys}
Colombatto, A., Giorgino, L., Vesco, A.: An identity key management system with
  deterministic key hierarchy for {SSI}-native {Internet of Things}. In:
  Proceedings of the 19th International Conference on Availability, Reliability
  and Security ({ARES}). {ACM} (2024). \doi{10.1145/3664476.3669929}

\bibitem{costan2016sgx}
Costan, V., Devadas, S.: Intel {SGX} explained. {IACR} Cryptology ePrint
  Archive  \textbf{2016}, ~86 (2016), \url{https://eprint.iacr.org/2016/086}

\bibitem{deng2025agents}
Deng, Z., Guo, Y., Han, C., Ma, W., Xiong, J., Wen, S., Xiang, Y.: {AI} agents
  under threat: A survey of key security challenges and future pathways. {ACM}
  Computing Surveys  \textbf{57}(7), ~182 (2025). \doi{10.1145/3716628}

\bibitem{deochake2022iam}
Deochake, S., Channapattan, V.: Identity and access management framework for
  multi-tenant resources in hybrid cloud computing. In: Proceedings of the 17th
  International Conference on Availability, Reliability and Security ({ARES}).
  {ACM} (2022). \doi{10.1145/3538969.3544896}

\bibitem{deochake2025multicloud}
Deochake, S., Murphy, R., Gearheart, J.: A multi-cloud framework for zero-trust
  workload authentication. arXiv preprint arXiv:2510.16067  (2025).
  \doi{10.48550/arXiv.2510.16067}

\bibitem{dijkhuis2024hdkeys}
Dijkhuis, S.: Hierarchical deterministic keys for the {IETF}. Internet-draft
  draft-dijkhuis-cfrg-hdkeys-01, IETF (2024), work in progress

\bibitem{dolev1983security}
Dolev, D., Yao, A.C.: On the security of public key protocols. IEEE
  Transactions on Information Theory  \textbf{29}(2),  198--208 (1983).
  \doi{10.1109/TIT.1983.1056650}

\bibitem{euaiact2024}
{European Parliament and Council of the European Union}: Regulation ({EU})
  2024/1689 laying down harmonised rules on artificial intelligence ({AI} act).
  Official Journal of the European Union (2024), article 9: Risk Management;
  Article 14: Human Oversight

\bibitem{gabriel2024agentic}
Gabriel, I., Manzini, A., Keeling, G., Hendricks, L.A., Rieser, V., Iqbal, H.,
  Toma\v{s}ev, N., Ktena, I., Kenton, Z., Rodriguez, M., El-Sayed, S., Brown,
  S., Blunsom, P., Isaac, W.: The ethics of advanced {AI} assistants. arXiv
  preprint arXiv:2404.16244  (2024). \doi{10.48550/arXiv.2404.16244}

\bibitem{garzon2025agents}
Garzon, S.R., Vaziry, A., Kuzu, E.M., Gehrmann, D.E., Varkan, B., Gaballa, A.,
  K\"{u}pper, A.: {AI} agents with decentralized identifiers and verifiable
  credentials. arXiv preprint arXiv:2511.02841  (2025)

\bibitem{goswami2025agenticjwt}
Goswami, A.: Agentic {JWT}: A secure delegation protocol for autonomous {AI}
  agents. arXiv preprint arXiv:2509.13597  (2025).
  \doi{10.48550/arXiv.2509.13597}

\bibitem{gray1989leases}
Gray, C.G., Cheriton, D.R.: Leases: An efficient fault-tolerant mechanism for
  distributed file cache consistency. In: Proceedings of the 12th {ACM}
  Symposium on Operating Systems Principles ({SOSP}). pp. 202--210. {ACM}
  (1989). \doi{10.1145/74851.74870}

\bibitem{gu2024agentsmith}
Gu, X., Zheng, X., Pang, T., Du, C., Liu, Q., Wang, Y., Jiang, J., Lin, M.:
  Agent smith: A single image can jailbreak one million multimodal {LLM} agents
  exponentially fast. In: Proceedings of the 41st International Conference on
  Machine Learning ({ICML}) (2024)

\bibitem{hammond2025multiagent}
Hammond, L., Chan, A., Clifton, J., Hoelscher-Obermaier, J., Khan, A., McLean,
  E., Smith, C., Barfuss, W., Foerster, J., Gavenčiak, T., Han, T.A., Hughes,
  E., Kovařík, V., Kulveit, J., Leibo, J.Z., Oesterheld, C., de~Witt, C.S.,
  Shah, N., Wellman, M., Bova, P., Cimpeanu, T., Ezell, C., Feuillade-Montixi,
  Q., Franklin, M., Kran, E., Krawczuk, I., Lamparth, M., Lauffer, N., Meinke,
  A., Motwani, S., Reuel, A., Conitzer, V., Dennis, M., Gabriel, I., Gleave,
  A., Hadfield, G., Russell, S., Liu, Y., Wooldridge, M.: Multi-agent risks
  from advanced {AI}. arXiv preprint arXiv:2502.14143  (2025).
  \doi{10.48550/arXiv.2502.14143}

\bibitem{hunt2010zookeeper}
Hunt, P., Konar, M., Junqueira, F.P., Reed, B.: {ZooKeeper}: Wait-free
  coordination for internet-scale systems. In: Proceedings of the 2010 {USENIX}
  Annual Technical Conference ({USENIX} {ATC}). {USENIX} Association (2010)

\bibitem{johnson2001ecdsa}
Johnson, D., Menezes, A., Vanstone, S.: The elliptic curve digital signature
  algorithm ({ECDSA}). International Journal of Information Security
  \textbf{1}(1),  36--63 (2001). \doi{10.1007/s102070100002}

\bibitem{kasselman2026aiagentauth}
Kasselman, P., Lombardo, J., Rosomakho, Y., Campbell, B., Steele, N.: {AI}
  agent authentication and authorization. Internet-draft
  draft-klrc-aiagent-auth-01, IETF (2026), work in progress; authors from
  Defakto Security, AWS, Zscaler, Ping Identity, OpenAI

\bibitem{rfc5869}
Krawczyk, H., Eronen, P.: {HMAC}-based extract-and-expand key derivation
  function ({HKDF}). {RFC} 5869, IETF (2010)

\bibitem{menetrey2022attestation}
M\'{e}n\'{e}trey, J., G\"{o}ttel, C., Khurshid, A., Pasin, M., Felber, P.,
  Schiavoni, V., Raza, S.: Attestation mechanisms for trusted execution
  environments demystified. In: Proceedings of the 22nd International
  Conference on Distributed Applications and Interoperable Systems ({DAIS}).
  LNCS, vol. 13272, pp. 95--113. Springer (2022).
  \doi{10.1007/978-3-031-16092-9_7}

\bibitem{muid2025accurevoke}
Muid, M.R.A., Chung, T., Hoang, T.: {AccuRevoke}: Enhancing certificate
  revocation with distributed cryptographic accumulators. In: Proceedings of
  the 46th {IEEE} Symposium on Security and Privacy ({S\&P}). IEEE (2025)

\bibitem{raut2025txntokens}
Raut, A.: Transaction tokens for agents. Internet-draft
  draft-oauth-transaction-tokens-for-agents-00, IETF (2025), work in progress

\bibitem{rfc7662}
Richer, J.: {OAuth 2.0} token introspection. {RFC} 7662, IETF (2015)

\bibitem{rose2020nist}
Rose, S., Borchert, O., Mitchell, S., Connelly, S.: Zero trust architecture.
  Special publication 800-207, National Institute of Standards and Technology
  ({NIST}) (2020). \doi{10.6028/NIST.SP.800-207}

\bibitem{rfc6960}
Santesson, S., Myers, M., Ankney, R., Malpani, A., Galperin, S., Adams, C.:
  {X.509} internet public key infrastructure online certificate status protocol
  -- {OCSP}. {RFC} 6960, IETF (2013)

\bibitem{tobin2025delegation}
South, T., Marro, S., Hardjono, T., Mahari, R., Whitney, C., Chan, A.,
  Pentland, A.: Position: {AI} agents need authenticated delegation. In:
  Proceedings of the 42nd International Conference on Machine Learning ({ICML})
  (2025), position Paper Track (Oral)

\bibitem{w3c2025vcdm}
Sporny, M., Longley, D., Chadwick, D., Herman, I.: Verifiable credentials data
  model v2.0. {W3C} recommendation, World Wide Web Consortium ({W3C}) (2025),
  \url{https://www.w3.org/TR/vc-data-model-2.0/}

\bibitem{w3c2025bitstring}
Sporny, M., Longley, D., Prorock, M., Alkhraishi, M.: Bitstring status list
  v1.0. {W3C} recommendation, World Wide Web Consortium ({W3C}) (May 2025),
  \url{https://www.w3.org/TR/2025/REC-vc-bitstring-status-list-20250515/}

\bibitem{wuille2012bip32}
Wuille, P.: {BIP-32}: Hierarchical deterministic wallets. Bitcoin Improvement
  Proposal (2012),
  \url{https://github.com/bitcoin/bips/blob/master/bip-0032.mediawiki}

\end{thebibliography}

\end{document}